\newtheorem{definition}{Definition}
\newtheorem{lemma}{Lemma}
\newtheorem{cor}{Corollary}[section]
\newtheorem{assumption}{Assumption}[section]
\newtheorem{theorem}{Theorem}
\newfont{\bbb}{msbm10 scaled 500}
\newfont{\bb}{msbm10 scaled 1100}
\newcommand{\CC}{\mbox{\bb C}}
\newcommand{\EE}{\mbox{\bb E}}
\newcommand{\hv}{{\bf h}}
\newcommand{\nv}{{\bf n}}
\newcommand{\pv}{{\bf p}}
\newcommand{\uv}{{\bf u}}
\newcommand{\vv}{{\bf v}}
\newcommand{\xv}{{\bf x}}
\newcommand{\yv}{{\bf y}}
\newcommand{\zerov}{{\bf 0}}
\newcommand{\Am}{{\bf A}}
\newcommand{\Hm}{{\bf H}}
\newcommand{\Id}{{\bf I}}
\newcommand{\Sm}{{\bf S}}
\newcommand{\Vm}{{\bf V}}
\newcommand{\Ac}{{\cal A}}
\newcommand{\Cc}{{\cal C}}
\newcommand{\Nc}{{\cal N}}
\newcommand{\Wc}{{\cal W}}
\newcommand{\nuv}{\hbox{\boldmath$\nu$}}
\newcommand{\diag}{{\hbox{diag}}}
\newcommand{\trace}{{\hbox{tr}}}
\newcommand{\rank}{{\hbox{rank}}}
\DeclareFontFamily{U}{cmfi}{}
\DeclareFontShape{U}{cmfi}{m}{n}{ <-> cmfi10 }{}
\DeclareSymbolFont{CMFI}{U}{cmfi}{m}{n}
\begin{document}

\vspace{-3cm}
\title{On the Compound MIMO Broadcast Channels with Confidential Messages}

\author{\authorblockN{Mari Kobayashi}
\authorblockA{SUPELEC \\
Gif-sur-Yvette, France\\
}
\and
\authorblockN{Yingbin Liang}
\authorblockA{University of Hawaii\\
Honolulu, HI, USA\\
}
\and
\authorblockN{Shlomo Shamai (Shitz)}
\authorblockA{ Technion-Israel Institute of Technology\\
Haifa, Israel\\ 
}
\and
\authorblockN{M\'erouane Debbah}
\authorblockA{SUPELEC \\
Gif-sur-Yvette, France\\ 
}
}

\maketitle

\thispagestyle{plain}

\begin{abstract}
We study the compound multi-input multi-output (MIMO) broadcast
channel with confidential messages (BCC), where one transmitter
sends a common message to two receivers and two confidential
messages respectively to each receiver. The channel state may take
one of a finite set of states, and the transmitter knows the state
set but does not know the realization of the state. We study
achievable rates with perfect secrecy in the high SNR regime by
characterizing an achievable secrecy degree of freedom (s.d.o.f.)
region for two models, the Gaussian MIMO-BCC and the ergodic
fading multi-input single-output (MISO)-BCC without a common
message. We show that by exploiting an additional temporal
dimension due to state variation in the ergodic fading model, the
achievable s.d.o.f. region can be significantly improved compared
to the Gaussian model with a constant state, although at the price
of a larger delay.
\end{abstract}

\section{Introduction}
In most practical scenarios, perfect channel state information at
transmitter (CSIT) may not be available due to time-varying nature
of wireless channels (in particular for fast fading channels) and
limited resources for channel estimation. However, many wireless
applications must guarantee secure and reliable communication in
the presence of the channel uncertainty. In this paper, we
consider such a scenario in the context of the multi-input
multi-output (MIMO) broadcast channel, in which a transmitter
equipped with multi-antennas wishes to send one common message to
two receivers and two confidential messages respectively to the
two receivers. The channel uncertainty at the transmitter is
modeled as a compound channel, i.e., the channel to two receivers
may take one state from a finite set of states. The transmitter
knows the state set, but does not know the realization of the
channel state. The transmitter needs to send all messages reliably
while keeping each confidential message perfectly secret from the
non-intended receiver, no matter which channel state occurs.

We note that the compound MIMO broadcast channel with confidential messages (BCC) is not yet fully understood.
This can be expected from two special cases studied in
\cite{weingarten2007cmb} and \cite{LiuPoorIT09}. One the one hand,
it is well known that without secrecy constraints the capacity
region of the MIMO-BC under general CSIT is unknown. Moreover,
even the d.o.f. region of the compound MIMO-BC is not fully known
despite the recent progress \cite{weingarten2007cmb}. On the other
hand, although the secrecy capacity region of the two-user
MISO-BCC has recently been characterized \cite{LiuPoorIT09}, the
secrecy capacity of a general MIMO-BCC remains open.

In this paper, we study achievable secrecy degree of freedom
(s.d.o.f.) regions of the MIMO-BCC, which characterize the
behavior of an achievable secrecy rate region in the high
signal-to-noise (SNR)
regime.  
We consider two compound MIMO-BCC models. The first model is the
Gaussian compound MIMO-BCC, in which the channel remains in the
same state during the entire transmission. We assume that each
terminal is equipped with multiple antennas and the transmitter
sends one common message as well as two confidential messages to
two receivers. We propose a beamforming scheme to obtain an
achievable s.d.o.f., and characterize the impact of the number of
antennas and the number of channel states on this region. We show
that with $M$ transmit antennas, $N_k$ receive antennas and $J_k$
states for $k=1,2$, a positive s.d.o.f. is ensured to both
receivers only if the number of transmit antennas is sufficiently
large, i.e. $M>\max(J_1N_1,J_2N_2)$.

The second model we study is the ergodic fading compound
multi-input single-output (MISO)-BCC, where the channel remains in
one state for a block duration and then changes independently from
one block to another. We model the channel state at each block as
a set of random variables uniformly distributed over a finite set.
Applying the variable-rate transmission strategy proposed for the
ergodic fading wiretap channel with partial CSIT
\cite{gopala2006scf}, we characterize an achievable s.d.o.f.
region. It is shown that time variation of the channel (which
introduces an additional temporal dimension) enables to improve
the s.d.o.f. region compared to the Gaussian model with constant
channel state, although the second model applies only to
delay-tolerant applications.

We note that the compound MIMO-BCC yields a number of previously
studied models as special cases. For the special case of perfect
CSIT, the secrecy capacity region of the two-user MISO-BCC has
been recently characterized in \cite{LiuPoorIT09}. A more general
two-user MIMO-BCC is considered in \cite{Liangisita08}, where the
secrecy capacity region of the MIMO-BCC with one common message
and one confidential message is characterized. For the
frequency-selective BCC modeled as a special Toeplitz structure of
the MIMO-BCC, the s.d.o.f. region is analyzed in
\cite{VandermondeSubmitted}. All above studies do not address the
compound nature of the channel. For the special case of only one
confidential message, the capacity of the degraded MIMO compound
wiretap channel is characterized and an achievable s.d.o.f. of the
MIMO compound wiretap channel is derived in
\cite{liangallerton07}. The s.d.o.f. of the compound wiretap
parallel channels is considered in \cite{liangpimrc08,liu2008scc}.

The paper is organized as follows. In Sections \ref{sec:deterministic} and \ref{sec:ergodic}, we study the Gaussian MIMO-BCC and the ergodic fading MISO-BCC, respectively. Section \ref{sec:conclusions} concludes the paper.

In this paper, we adopt the following notations. We let
$[x]_+=\max\{0,x\}$ and $C(x)=\log(1+x)$. We use $\xv^n$ to denote
the sequence $(\xv_1,\dots,\xv_n)$, and use $u,v,w,\xv,\yv$ to
denote the realization of the random variables $U,V,W,X,Y$. We use
$|\Am|, \Am^H, \trace(\Am)$ to denote the determinant, the
hermitian transpose, and the trace of a matrix $\Am$,
respectively.

\section{Gaussian compound MIMO-BCC}\label{sec:deterministic}

\subsection{Model and Definitions}\label{sec:gaussmodel}

We consider the MIMO-BCC, where the transmitter sends the
confidential messages $W_1,W_2$ to receivers 1 and 2 as well as a
common message $W_0$ to both receivers. The transmitter, and
receivers 1 and 2 are equipped with $M, N_1,N_2$ antennas,
respectively. The transmitter knows a discrete set of possible
channel states, and each receiver has perfect CSI.
The channel output of receiver $k$ in state $j$ at each channel use is given by
\begin{eqnarray}
 \yv_{k,j} &=& \Hm_k^j \xv + \nuv_{k}^j,\;\; j=1,\dots,J_k, \; k=1,2
\end{eqnarray}
where $J_k$ denotes the number of possible channel states of
receiver $k$, $\Hm_k^j\in \CC^{N_k\times M}$ is the channel matrix
of user $k$ in state $j$, $\nuv_{k}^j\sim\Nc_{\Cc}(\zerov,\Id)$ is
an additive white Gaussian noise (AWGN) and is independent and
identically distributed (i.i.d.) over $k$ and $j$, and the
covariance $\Sm_x$ of the input vector $\xv$ satisfies the power
constraint $\trace(\Sm_x)\leq P$. For the channel matrices, we
have the following assumption.
\begin{assumption}\label{assumption:rank}
Any $M$ rows taken from the matrices $ \Hm_1^1, \ldots,
\Hm_1^{J_1}, \Hm_2^1, \ldots, \Hm_2^{J_2}$ has rank $M$.
\end{assumption}

\begin{definition} A $(2^{nR_0},2^{nR_1}, 2^{n R_2}, n)$ block code for the Gaussian compound MIMO-BCC consists of following:
\begin{itemize}
  \item Three message sets : $\Wc_i=\{1,\dots,2^{nR_i}\}$ and $W_i$ is uniformly distributed over $\Wc_i$ for $i=0,1,2$.
  \item A stochastic encoder that maps a message set $(w_0,w_1,w_2)\in(\Wc_0,\Wc_1,\Wc_2)$ into a codeword $\xv^n$.
  \item Two decoders : decoder $k$ maps a received sequence $\yv_{k,j}^n$ into $(\hat{w}_0^{(k,j)},\hat{w_k}^{(j)})\in(\Wc_0,\Wc_k)$ for $j=1,\dots,J_k$ and $k=1,2$.
\end{itemize}
\end{definition}
A rate tuple $(R_0, R_1,R_2)$ is {\em achievable} if for any
$\epsilon>0 $, there exists a $(2^{nR_0},2^{nR_1}, 2^{n R_2}, n)$
block code such that the average error probability of receivers 1
and 2 at state $(j,l)$ satisfy
\begin{eqnarray*}
P_{e,1,j}^{(n)} \leq \epsilon, \;\;  P_{e,2,l}^{(n)} \leq \epsilon,
\end{eqnarray*}
and
\begin{eqnarray}\label{eq:SecurityConstraints}
 nR_1 -H(W_1|Y_{2,l}^{n}) \leq n\epsilon, \;\; nR_2 - H(W_2|Y_{1,j}^{n}) \leq n\epsilon
\end{eqnarray}
for any $j=1,\dots,J_1,l=1,\dots,J_2$. Note that
\eqref{eq:SecurityConstraints} requires perfect secrecy for the
confidential messages at the non-intended receiver.

We further define the degree of freedom (d.o.f.) of the common message and the secrecy degree of freedom of the confidential messages as
\[r_0 = \lim_{P\rightarrow \infty}\frac{R_0(P)}{\log (P)},  \;\;\;  r_k = \lim_{P\rightarrow \infty}\frac{R_k(P)}{\log(P)},\; k=1,2. \]

\subsection{Secrecy Degree of Freedom Region}
An achievable secrecy rate region for the discrete memoryless broadcast channel with one common and two confidential messages was given in \cite{xu2008cbb}. We can extend this result to the corresponding compound channel studied in this paper and obtain an achievable secrecy rate region given by
\begin{eqnarray}\label{DeterministicRegion}
&& 0 \leq  R_0 \leq\min_{k,j}I(U;Y_{k,j})\\ \nonumber
 && 0\leq R_1  \leq  \min_{j,l} [I(V_1;Y_{1,j}|U)-I(V_1;Y_{2,l},V_2|U)] \\ \nonumber
&& 0\leq  R_2  \leq  \min_{j,l} [I(V_2;Y_{2,l}|U)-I(V_2;Y_{1,j},V_1|U)]
\end{eqnarray}
over all possible joint distributions of $(U,V_1,V_2,X)$
satisfying
\begin{equation}
    U \rightarrow (V_1,V_2) \rightarrow X \rightarrow (Y_{1,j},Y_{2,l}), \forall
    j,l.
\end{equation}

Based on the preceding region, we obtain the following theorem on an achievable s.d.o.f. region..
\begin{theorem} \label{th:SDoFdeterm1}
Consider the Gaussian compound MIMO-BCC with $M$ transmit
antennas, $N_k$ receive antennas and $J_k$ channel states at
receiver $k$ for $k=1,2$. If $J_1 N_1 <M $ and $J_2N_2<M$, an
achievable s.d.o.f. region is a union of $(r_0,r_1,r_2)$ that
satisfies
\begin{flalign}
r_1 & \leq  \min(N_1,M-J_2N_2) \nonumber \\
r_2 & \leq \min(N_2,M-J_1N_1) \nonumber \\
r_0+r_1 & \leq N_1 \nonumber \\
r_0+r_2 & \leq N_2 \label{eq:sdofdeterm1}
\end{flalign}
\end{theorem}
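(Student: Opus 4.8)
The plan is to prove achievability by an explicit linear zero-forcing beamforming scheme and then to evaluate the pre-log (d.o.f.) of each mutual-information term in the secrecy region \eqref{DeterministicRegion}. Form the stacked matrix $\tilde{\Hm}_2\in\CC^{J_2N_2\times M}$ by vertically concatenating $\Hm_1^1,\dots$ — more precisely $\Hm_2^1,\dots,\Hm_2^{J_2}$ — and similarly $\tilde{\Hm}_1\in\CC^{J_1N_1\times M}$. Under Assumption \ref{assumption:rank}, $\tilde{\Hm}_2$ has rank $J_2N_2<M$, so its null space has dimension $M-J_2N_2>0$, and likewise $\tilde{\Hm}_1$ has an $(M-J_1N_1)$-dimensional null space. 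I would transmit $\xv=\Gm_0\sv_0+\Gm_1\sv_1+\Gm_2\sv_2$ with independent Gaussian streams $\sv_0,\sv_1,\sv_2$ each at power proportional to $P$, where the $d_1$ columns of $\Gm_1$ lie in the null space of $\tilde{\Hm}_2$, the $d_2$ columns of $\Gm_2$ lie in the null space of $\tilde{\Hm}_1$, and $\Gm_0$ carries the common message. I identify the common-message auxiliary $U=\sv_0$ and the confidential auxiliaries $V_k=\Gm_0\sv_0+\Gm_k\sv_k$, so that $\xv$ is a deterministic function of $(V_1,V_2)$ and the required Markov chain $U\rightarrow(V_1,V_2)\rightarrow X\rightarrow(Y_{1,j},Y_{2,l})$ holds.

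The leakage terms collapse by construction. Since every column of $\Gm_1$ is annihilated by $\Hm_2^l$ for all $l$, we have $\Hm_2^l\Gm_1=\zerov$, so $Y_{2,l}$ carries no component of $\sv_1$; conditioning on $U$ and $V_2$ then gives $I(V_1;Y_{2,l},V_2\mid U)=0$, and symmetrically $I(V_2;Y_{1,j},V_1\mid U)=0$. For the forward term, because $\Hm_1^j\Gm_2=\zerov$ the observation at receiver $1$ reduces, after conditioning on $U$, to $\Hm_1^j\Gm_1\sv_1+\nuv_1^j$. Choosing $d_1=\min(N_1,M-J_2N_2)$ and invoking Assumption \ref{assumption:rank} to ensure that a generic $\Gm_1$ makes $\Hm_1^j\Gm_1$ of full column rank $d_1$ for every $j$, I obtain $I(V_1;Y_{1,j}\mid U)=d_1\log P+O(1)$, whence $r_1=\min(N_1,M-J_2N_2)$, and symmetrically $r_2=\min(N_2,M-J_1N_1)$.

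For the common message I would bound the pre-log of $\min_{k,j}I(U;Y_{k,j})$. At receiver $1$ the desired signal $\Hm_1^j\Gm_0\sv_0$ ($d_0$ streams) is observed together with the interference $\Hm_1^j\Gm_1\sv_1$ ($d_1$ streams) over $N_1$ antennas; treating the confidential stream as noise, the mutual information attains pre-log $d_0$ exactly when signal and interference occupy separable dimensions, i.e. when $d_0+d_1\le N_1$ and $\Hm_1^j[\Gm_0\ \Gm_1]$ has full column rank. The same reasoning at receiver $2$ forces $d_0+d_2\le N_2$. Setting $r_0=d_0$ and sweeping over all integer $(d_0,d_1,d_2)$ with $d_1\le\min(N_1,M-J_2N_2)$, $d_2\le\min(N_2,M-J_1N_1)$, $d_0+d_1\le N_1$ and $d_0+d_2\le N_2$ achieves every vertex of the polytope \eqref{eq:sdofdeterm1}; since all four right-hand sides are integers, the vertices are integral, and closing the achievable set under time-sharing fills in the remaining points, yielding exactly the claimed region.

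The main obstacle I anticipate is the joint rank bookkeeping. I must exhibit beamformers $\Gm_0,\Gm_1,\Gm_2$ that simultaneously satisfy, across all states $j,l$, the full-rank conditions for the confidential effective channels $\Hm_1^j\Gm_1$, $\Hm_2^l\Gm_2$ and for the combined matrices $\Hm_1^j[\Gm_0\ \Gm_1]$, $\Hm_2^l[\Gm_0\ \Gm_2]$, while keeping the three column spaces inside $\CC^M$. Assumption \ref{assumption:rank} implies each such condition holds on a Zariski-dense, hence nonempty, set of beamformers, so a generic choice meets the finitely many constraints at once; the delicate point is verifying that these conditions are \emph{jointly} satisfiable and that the common-message pre-log is exactly $d_0$, rather than merely at most $d_0$, once the confidential interference is present.
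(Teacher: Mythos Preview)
Your proposal is correct and follows essentially the same zero-forcing beamforming approach as the paper: place the confidential precoders $\Gm_k$ in the null space of the stacked non-intended channels so the leakage terms vanish, superpose a common-message precoder, and read off the pre-logs from the ranks of the effective channels. The one difference worth noting is that the paper resolves the ``joint rank bookkeeping'' you flag as an obstacle by choosing the common-message precoder $\Vm_0$ to span the orthogonal complement of $[\Vm_1\ \Vm_2]$ (rather than a generic $\Gm_0$), which cleanly separates the common and confidential subspaces and makes the rank of $\Hm_k^j[\Vm_0\ \Vm_k]$ equal to $N_k$ without a genericity argument.
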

\vspace{0.5em}

\begin{proof}(Outline)
We apply a simple linear beamforming strategy to provide an
achievable s.d.o.f. region. We first prove a useful lemma.
\begin{lemma}\label{lemma:rank}
For $0 \leq r_1 \leq \min(N_1,M-J_2N_2)$ and $0 \leq r_2 \leq
\min(N_2,M-J_1N_1)$, there exist $\vv_k^1, \ldots,\vv_k^{r_k}$ for
$k=1,2$, each with dimension $M$ that form a matrix $\Vm_k=[
\vv_k^1 \cdots \vv_k^{r_k}]$, such that
\begin{equation}\label{eq:Orth}
\Hm_{k}^j\Vm_{k'} = 0 \quad \text{for } k'\neq k, \quad
j=1,\ldots,J_{k}
\end{equation}
and $ \rank(\Hm_{k}^j\Vm_k) = r_k $ for $ j=1,\ldots,J_k$.
\end{lemma}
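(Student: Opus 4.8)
The plan is to construct $\Vm_1$ and $\Vm_2$ separately and symmetrically; I describe the construction of $\Vm_1$, that of $\Vm_2$ being identical with the two user indices interchanged. The zero-forcing requirement \eqref{eq:Orth} for $\Vm_1$ says exactly that every column of $\Vm_1$ must lie in the common null space of the matrices $\Hm_2^1,\ldots,\Hm_2^{J_2}$, i.e. in $W := \ker \tilde{\Hm}_2$, where $\tilde{\Hm}_2$ is the $(J_2 N_2)\times M$ matrix obtained by stacking $\Hm_2^1,\ldots,\Hm_2^{J_2}$ vertically. By Assumption~\ref{assumption:rank}, any collection of at most $M$ rows drawn from the channel matrices is linearly independent; since $J_2 N_2 < M$, the matrix $\tilde{\Hm}_2$ has full row rank $J_2 N_2$, so $W$ is a subspace of dimension $M - J_2 N_2 \ge r_1$. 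Thus there is room to place $r_1$ columns inside $W$, and \eqref{eq:Orth} holds automatically for any such choice.

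It remains to choose these $r_1$ columns so that $\rank(\Hm_1^j \Vm_1) = r_1$ \emph{simultaneously} for every $j=1,\ldots,J_1$. First I would check that, for each fixed state $j$, the restriction of $\Hm_1^j$ to $W$ has rank at least $r_1$, so that the target is attainable. Writing $\dim \Hm_1^j(W) = \dim W - \dim(W\cap \ker \Hm_1^j)$ and noting that $W \cap \ker \Hm_1^j$ is the null space of the $(N_1+J_2N_2)\times M$ matrix obtained by stacking $\Hm_1^j$ on top of $\tilde{\Hm}_2$, whose rank equals $\min(N_1 + J_2 N_2,\, M)$ again by Assumption~\ref{assumption:rank}, a short computation in the two cases $N_1 + J_2 N_2 \le M$ and $N_1 + J_2 N_2 > M$ gives $\dim \Hm_1^j(W) = \min(N_1,\, M - J_2 N_2) \ge r_1$ in both cases.

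The main obstacle is the word \emph{simultaneously}: columns achieving full rank for one state $j$ need not do so for another. I would resolve this by a genericity argument. Parametrize the candidate beamformer by its columns $(\vv_1^1,\ldots,\vv_1^{r_1}) \in W^{r_1}$. For a fixed $j$, the condition $\rank(\Hm_1^j \Vm_1) < r_1$ is the simultaneous vanishing of all $r_1\times r_1$ minors of $\Hm_1^j\Vm_1$, hence cuts out a proper algebraic subvariety $Z_j$ of $W^{r_1}$ — proper precisely because the rank bound just established exhibits at least one choice with $\rank(\Hm_1^j \Vm_1) = r_1$. The set of bad beamformers is the finite union $\bigcup_{j=1}^{J_1} Z_j$, which is again a proper subvariety and therefore has empty interior (equivalently, Lebesgue measure zero). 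Any choice of $(\vv_1^1,\ldots,\vv_1^{r_1})$ in its complement — for instance a generic, or randomly drawn, one — satisfies $\rank(\Hm_1^j \Vm_1) = r_1$ for all $j$ while meeting \eqref{eq:Orth}. Repeating the argument for $\Vm_2$ with the roles of the two receivers exchanged (now using $J_1 N_1 < M$) produces the desired $\Vm_2$, which completes the proof.
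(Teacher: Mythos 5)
There is no proof in the paper to compare against: immediately after the statement the authors write that the proof of Lemma~\ref{lemma:rank} is omitted due to space limitations. Judged on its own merits, your argument is correct and complete, and it follows what is almost certainly the intended route for such simultaneous-rank lemmas. The three steps all check out: (i) the zero-forcing constraint \eqref{eq:Orth} is exactly membership of the columns of $\Vm_1$ in $W=\ker\tilde{\Hm}_2$, which has dimension $M-J_2N_2\ge r_1$; (ii) the dimension count $\dim \Hm_1^j(W)=\dim W-\dim(W\cap\ker\Hm_1^j)=\min(N_1,M-J_2N_2)\ge r_1$, using Assumption~\ref{assumption:rank} on the stacked matrix, is right in both cases $N_1+J_2N_2\le M$ and $N_1+J_2N_2>M$; and (iii) the genericity step is sound, since each bad set $Z_j$ is cut out by the vanishing of all $r_1\times r_1$ minors, is proper precisely because step (ii) furnishes a witness of full rank for each fixed $j$, and a finite union of proper subvarieties of $W^{r_1}$ has measure zero, so a generic choice defeats all $J_1$ states at once; the construction of $\Vm_2$ is indeed decoupled, as \eqref{eq:Orth} never couples $\Vm_1$ and $\Vm_2$. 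One small point worth a sentence in a written-up version: Assumption~\ref{assumption:rank} as literally stated concerns sets of exactly $M$ rows, whereas you invoke independence of \emph{at most} $M$ rows (e.g.\ the $J_2N_2<M$ rows of $\tilde{\Hm}_2$, or the stacked matrix when $N_1+J_2N_2<M$); this follows by completing the row set to $M$ rows when $J_1N_1+J_2N_2\ge M$, and otherwise requires reading the assumption in its clearly intended ``general position'' sense --- a wrinkle of the paper's phrasing rather than a gap in your proof.
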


The proof of Lemma \ref{lemma:rank} is omitted due to space
limitations. Based on $\Vm_1$ and $\Vm_2$ given in Lemma
\ref{lemma:rank}, for the given $0 \leq r_1 \leq
\min(N_1,M-J_2N_2)$ and $0 \leq r_2 \leq \min(N_2,M-J_1N_1)$, we
let $\vv_0^1,\ldots,\vv_0^{K}$ be orthnormal vectors in the null
space of $[\Vm_1,\Vm_2]$, where $K=M-\rank[\Vm_1 \Vm_2]$. Hence,
if we let $\Vm_0=[\vv_0^1,\ldots,\vv_0^K]$, then $ \Vm_0^H
[\Vm_1,\Vm_2]= \zerov$.

We form the transmit vector at each channel use by Gaussian
superposition coding
\begin{equation}\label{GSP}
    \xv = \Vm_0 \uv_0+ \Vm_1 \uv_1 + \Vm_2 \uv_2
\end{equation}
where $\uv_0,\uv_1,\uv_2$ are mutually independent with i.i.d.
entries $u_{k,i}\sim\Nc_{\Cc}(0,p_{k,i})$ for any $k,i$ with
$u_{k,i}$ denoting the $i$-th element of $\uv_k$.

From \eqref{eq:Orth}, the received signals are given by
\begin{flalign}
    \yv_1^j & = \Hm_1^j (\Vm_0\uv_0 + \Vm_1 \uv_1) +\nv^j_1, \;\; j=1,\dots,J_1 \\
    \yv_2^l & = \Hm_2^l (\Vm_0\uv_0 + \Vm_2 \uv_2) +\nv^l_2, \;\; l=1,\dots,J_2
\end{flalign}
By letting $U=\Vm_0\uv_0$, $V_k = U+\Vm_k \uv_k$, $X=V_1+V_2$, we
obtain
\begin{small}
\begin{flalign}
&I(U; Y_{k,j})  =  \log \frac{|\Id+  \Hm_k^j (\Vm_0\diag(\pv_0)\Vm_0^H + \Vm_k \diag(\pv_k) \Vm_k^H){\Hm_k^j}^H |}{ |\Id+ \Hm_k^j \Vm_k \diag(\pv_k) \Vm_k^H  {\Hm_k^j}^H|} \label{eq:uy}\\
&I(V_k;Y_{k,j}|U)= \log|\Id + \Hm_k^j \Vm_k \diag(\pv_k) \Vm_k^H
{\Hm_k^j}^H| \label{eq:vyu}
\end{flalign}
\end{small}
In order to find the s.d.o.f., we consider equal power allocation
over all beamforming directions. we first notice that the pre-log
factor of $\log|\Id + P \Am|$ is determined by $\rank(\Am)$ as
$P\rightarrow\infty$.

From Lemma \ref{lemma:rank}, we obtain
\begin{flalign}
& \rank(\Hm_1^j\Vm_1 \Vm_1^H {\Hm_1^j}^H)=\rank(\Hm_1^j\Vm_1)= r_1
\nonumber \\
& \rank(\Hm_2^j\Vm_2 \Vm_2^H {\Hm_2^j}^H)=r_2. \nonumber
\end{flalign}
We then obtain
\begin{small}
\begin{flalign}
r_0 & = \rank \left(\Hm_1^j(\Vm_0 \Vm_0^H+\Vm_1\Vm_1^H) {\Hm_1^j}^H \right)-\rank(\Hm_1^j\Vm_1\Vm_1^H {\Hm_1^j}^H) \nonumber \\
& = N_1- r_1  \nonumber
\end{flalign}
\end{small}
and similarly, $ r_0 = N_2- r_2 $, which concludes the proof.
\end{proof}

By using the beamforming scheme similar to that for Theorem
\ref{th:SDoFdeterm1}, we obtain the following corollaries.
\begin{cor}
For the Gaussian compound MIMO-BCC with $J_1 N_1 <M $ and
$J_2N_2\geq M$, an achievable s.d.o.f. region includes
$(r_0,r_1,r_2)$ that satisfies $r_1=0$, $r_2 \leq \min(N_2,
M-J_1N_1)$, and $r_0 \leq \min(N_1,N_2-r_2)$.
\end{cor}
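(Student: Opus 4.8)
The plan is to reuse the linear beamforming construction from the proof of Theorem~\ref{th:SDoFdeterm1}, now in the regime $J_2N_2\ge M$, switching off the branch that carries the confidential message $W_1$. A secrecy-compatible beamformer for $W_1$ would have to lie in the null space of the $J_2N_2\times M$ matrix obtained by stacking $\Hm_2^1,\dots,\Hm_2^{J_2}$, so as to be invisible to receiver~$2$; but that matrix has at least $M$ rows and, by Assumption~\ref{assumption:rank}, rank $M$, hence trivial null space. The construction therefore forces $r_1=0$, and I would transmit $\xv=\Vm_0\uv_0+\Vm_2\uv_2$ with no $\Vm_1$ term.

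The hypothesis $J_1N_1<M$ of Lemma~\ref{lemma:rank} still holds, so I would apply it to obtain $\Vm_2=[\vv_2^1\cdots\vv_2^{r_2}]$ with $0\le r_2\le\min(N_2,M-J_1N_1)$ satisfying $\Hm_1^j\Vm_2=\zerov$ for all $j$ and $\rank(\Hm_2^l\Vm_2)=r_2$ for all $l$. I would then choose $\Vm_0$ as $K=M-r_2$ orthonormal vectors spanning the orthogonal complement of the columns of $\Vm_2$, so that $[\Vm_0,\Vm_2]$ is a nonsingular $M\times M$ matrix. Perfect secrecy of $W_2$ at receiver~$1$ then follows exactly as in Theorem~\ref{th:SDoFdeterm1}: since $\Hm_1^j\Vm_2=\zerov$, the output $\yv_1^j=\Hm_1^j\Vm_0\uv_0+\nv_1^j$ does not depend on $\uv_2$, so $I(V_2;Y_{1,j},V_1|U)=0$ (here $V_1=U$), and the s.d.o.f. of $W_2$ equals its plain d.o.f., whose pre-log from \eqref{eq:vyu} is $\rank(\Hm_2^l\Vm_2)=r_2$.

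It remains to bound the common-message d.o.f. via $r_0\le\min_{k,j}\big(\text{pre-log of }I(U;Y_{k,j})\big)$, using \eqref{eq:uy}. At receiver~$2$ the matrix $[\Vm_0,\Vm_2]$ spans $\CC^M$, so the numerator rank equals $\rank(\Hm_2^l)=\min(N_2,M)$ and the denominator rank is $r_2$, giving pre-log $\rank(\Hm_2^l)-r_2$. At receiver~$1$ the $\Vm_2$ contribution is nulled, so the pre-log is $\rank(\Hm_1^j\Vm_0)$; here I would exploit the containment $\mathrm{range}(\Vm_2)\subseteq\ker(\Hm_1^j)$ (from $\Hm_1^j\Vm_2=\zerov$) together with $\mathrm{range}(\Vm_0)=\mathrm{range}(\Vm_2)^{\perp}$ to conclude $\ker(\Hm_1^j)+\mathrm{range}(\Vm_0)=\CC^M$, so that $\Hm_1^j\Vm_0$ has the same image as $\Hm_1^j$, i.e. rank exactly $N_1$ (with $N_1\le J_1N_1<M$ guaranteeing $\Hm_1^j$ has full row rank $N_1$). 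Combining the two receivers gives $r_0\le\min\big(N_1,\,\rank(\Hm_2^l)-r_2\big)$; since $r_2\le M-J_1N_1\le M-N_1$ forces $M-r_2\ge N_1$, this reduces in every case to the claimed $r_0\le\min(N_1,N_2-r_2)$.

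The step I expect to demand the most care is the exact identity $\rank(\Hm_1^j\Vm_0)=N_1$: one must verify that the containment $\mathrm{range}(\Vm_2)\subseteq\ker(\Hm_1^j)$ and the orthogonality $\mathrm{range}(\Vm_0)=\mathrm{range}(\Vm_2)^{\perp}$ together force $\Hm_1^j$ to map $\mathrm{range}(\Vm_0)$ onto all of $\mathrm{range}(\Hm_1^j)$, which is precisely where Assumption~\ref{assumption:rank} supplies the full-rank structure of the channel. The remaining manipulations are a direct transcription of the Theorem~\ref{th:SDoFdeterm1} computation with the $W_1$ branch deleted and $r_1$ set to zero.
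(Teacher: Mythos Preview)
Your proposal is correct and follows exactly the approach the paper indicates: the corollary is stated as a consequence of ``the beamforming scheme similar to that for Theorem~\ref{th:SDoFdeterm1}'', and you have supplied the details by deleting the $\Vm_1$ branch (forced to zero since the stacked $\Hm_2^l$ has rank $M$ under Assumption~\ref{assumption:rank}), invoking the $\Vm_2$ half of Lemma~\ref{lemma:rank}, and repeating the rank computations of \eqref{eq:uy}--\eqref{eq:vyu}. The only cosmetic point is that Lemma~\ref{lemma:rank} as stated has the hypothesis $r_1\le\min(N_1,M-J_2N_2)$, which is vacuous here; you are really using only the $\Vm_2$ portion of its construction, which depends solely on $J_1N_1<M$, so it would be cleaner to phrase it as ``by the construction used for $\Vm_2$ in Lemma~\ref{lemma:rank}''.
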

\begin{cor}
For the Gaussian compound MIMO-BCC with $J_1 N_1 \geq M $ and
$J_2N_2\geq M$, an achievable s.d.o.f. region includes $(r_0,0,0)$
with $r_0$ satisfying $r_0\leq \min(M,N_1,N_2)$.
\end{cor}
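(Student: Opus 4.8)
The plan is to treat $(r_0,0,0)$ as the degenerate corner of the beamforming scheme of Theorem~\ref{th:SDoFdeterm1}, in which both confidential directions vanish and only the common stream survives. First I would explain why the scheme forces $r_1=r_2=0$ in this regime. As in Lemma~\ref{lemma:rank}, a confidential stream for receiver~$1$ needs a nonzero $\Vm_1$ that is zero-forced at the other receiver, i.e.\ $\Hm_2^l\Vm_1=\zerov$ for every $l=1,\ldots,J_2$. Stacking $\Hm_2^1,\ldots,\Hm_2^{J_2}$ gives a $J_2N_2\times M$ matrix which, since $J_2N_2\geq M$, has full column rank $M$ by Assumption~\ref{assumption:rank}; its null space is trivial, so no such $\Vm_1$ exists and $r_1=0$. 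The symmetric argument using $J_1N_1\geq M$ gives $r_2=0$, so the transmitter can only carry the common message.

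It remains to achieve $r_0=\min(M,N_1,N_2)$. Setting $\Vm_1=\Vm_2=\zerov$ in \eqref{GSP}, I would transmit $\xv=\Vm_0\uv_0$ with $\Vm_0\in\CC^{M\times d_0}$ and $d_0=\min(M,N_1,N_2)$, so that the common message is decoded at each receiver as an ordinary point-to-point MIMO signal with no secrecy penalty. Specializing \eqref{eq:uy} to $\Vm_k=\zerov$ makes the denominator equal to one and leaves $I(U;Y_{k,j})=\log|\Id+\Hm_k^j\Vm_0\diag(\pv_0)\Vm_0^H{\Hm_k^j}^H|$, whose pre-log factor under equal power allocation as $P\to\infty$ equals $\rank(\Hm_k^j\Vm_0)$. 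Hence $r_0=\min_{k,j}\rank(\Hm_k^j\Vm_0)$, and the claim reduces to showing this minimum can be made equal to $d_0$.

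I expect the only delicate step to be the choice of a single $\Vm_0$ achieving $\rank(\Hm_k^j\Vm_0)=d_0$ \emph{simultaneously} for both $k$ and all states $j$. Assumption~\ref{assumption:rank} forces each $\Hm_k^j$ to have rank $\min(N_k,M)\geq d_0$, so for a fixed pair $(k,j)$ the set of $\Vm_0$ with $\rank(\Hm_k^j\Vm_0)<d_0$ is a proper algebraic subset of $\CC^{M\times d_0}$ (the common zero locus of the $d_0\times d_0$ minors), hence of Lebesgue measure zero. Because there are only finitely many pairs $(k,j)$, the union of these exceptional sets is still measure zero, so almost every $\Vm_0$ attains $\rank(\Hm_k^j\Vm_0)=d_0$ everywhere at once, giving $r_0=\min(M,N_1,N_2)$. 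Everything else is the same bookkeeping already performed in the proof of Theorem~\ref{th:SDoFdeterm1}.
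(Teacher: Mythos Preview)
Your proof is correct and follows the same beamforming approach the paper intends; the paper itself only states that the corollary follows ``by using the beamforming scheme similar to that for Theorem~\ref{th:SDoFdeterm1}'' without spelling out details, and you have filled those in properly.

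One simplification: the genericity argument for $\Vm_0$ is unnecessary. In the scheme of Theorem~\ref{th:SDoFdeterm1}, once $\Vm_1=\Vm_2=\zerov$ the null space of $[\Vm_1\;\Vm_2]$ is all of $\CC^M$, so the natural choice is $\Vm_0$ equal to an $M\times M$ unitary (e.g.\ the identity). Then $\rank(\Hm_k^j\Vm_0)=\rank(\Hm_k^j)=\min(N_k,M)$ directly from Assumption~\ref{assumption:rank}, and $r_0=\min_{k,j}\min(N_k,M)=\min(M,N_1,N_2)$ follows without invoking measure-zero exceptional sets. Your route via a generic $M\times d_0$ matrix reaches the same conclusion but does more work than needed.
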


To gain insight into these results, we consider some special
cases. For the case of perfect CSIT ($J_1=J_2=1$), the optimal
strategy in the high SNR regime is to transmit the confidential
message $k$ in the null space of the channel matrix of the other
$k'$. This yields the s.d.o.f. $r_1\leq \min(N_1,M-N_2), r_2\leq
\min(N_2,M-N_1)$ for $M>\max(N_1,N_2)$. Clearly, the s.d.o.f. of
user $k$ corresponds to the s.d.o.f. of the MIMO wiretap channel
\cite{khisti:gmw} where the transmitter sends one confidential
message to receiver $k$ in the presence of an eavesdropper (user
$k'\neq k$). In addition, Theorem \ref{th:SDoFdeterm1} states that
if $J_1=J_2=1$ and the total number of receive antennas is large,
i.e., $N_1+N_2>M$, we can achieve the sum d.o.f. $M$. This is
certainly optimal since the MIMO-BC achieves the sum d.o.f. of
$\min(M, N_1+N_2)=M$.

For the case with a single receive antenna at each receiver, i.e.,
$N_1=N_2=1$, and without common message, i.e., $r_0=0$, we have
$r_1\leq \min(1,M-J_2)$ and $r_2\leq \min(1,M-J_1)$. This result
can be compared to the d.o.f. of the compound MIMO-BC
\cite{weingarten2007cmb}. A positive s.d.o.f. tuple $(1,1)$ is
achievable if $J_1<M$ and $J_2<M$. If the channel uncertainty of
one user increases, for example, $J_2=M$, the s.d.o.f. of user 1
collapses. Moreover, the s.d.o.f. of both users becomes zero if
$J_1\geq M, J_2\geq M$. We remark that secrecy constraints
significantly reduce d.o.f. and sometimes may yield pessimistic
results with respect to \cite{weingarten2007cmb}.

\section{Ergodic Fading Compound MISO-BCC}\label{sec:ergodic}
\subsection{Model and Definitions}
We consider the MISO-BCC, where the transmitter with $M$ antennas
sends the confidential messages $W_1,W_2$ respectively to two
receivers, each equipped with single antenna. We consider the
ergodic block fading model, in which the channel remains in one
state for a block of $T$ channel uses and changes to another
channel state independently from one block to another. We assume
that the fading process is stationary and ergodic over time.
Hence, the channel state at block $t$ is given by the set of
random variables $(A_1[t],A_2[t],H[t])\in \Ac$, where
$\Ac=\{1,\dots,J_1\}\times \{1,\dots,J_2\}\times\{1,\dots,N\} $
denotes the space of fading states and each random variable is
uniformly distributed over its set. Under non-perfect CSIT, the
transmitter is assumed to know $H[t]$ and $J_1J_2$ possible states
at block $t$ but not the realization of $A_1[t]$ and $A_2[t]$, and
receiver $k$ is assumed to know both $H[t]$ and $A_k[t]$. At each
block $t$, the channel of user $k$ is expressed by two random
vectors $\hv_k^{A_k[t]}[ H[t]]$, for which we denote
$\hv_k^{A_k[t]}[t]$ for the notational simplicity. Finally, we
assume that for each $t$, any $M$ vectors taken from
$\{\hv_1^1[t],\dots,\hv_1^{J_1}[t],\hv_2^1[t],\dots,\hv_2^{J_2}[t]\}$
has rank $M$.

For each channel use at block $t$, the ergodic fading compound
MISO-BCC is expressed by
\begin{eqnarray*}\label{BlockModel}
y_k[t]= {\hv_k^j[t]}^H\xv [t] + \nu_k[t],\;\mbox{\small w.p. $P(A_k[t]=j|H[t])=\frac{1}{J_K},\forall j$ }
\end{eqnarray*}
for $k=1,2$ and $t=1,\dots,m$, where w.p. denotes with
probability, $\nu_k[t]\sim\Nc_{\Cc}(0,1)$ is an AWGN and i.i.d.
over $k,t$, and the input covariance $\Sm_x[t]$ of $\xv[t]$
satisfies the long-term power constraint $\frac{1}{m}\sum_{t=1}^m
\trace(\Sm_x[t])\leq P$. We let $n=mT$ denote the total number of
symbols over $m$ blocks. The definition for the s.d.o.f. is the
same as that in Section \ref{sec:gaussmodel}.

\subsection{Variable-Rate Transmission}\label{subsect:VariableRate}

We first note that as $m\rightarrow \infty, T\rightarrow\infty$,
the \emph{ergodic} achievable secrecy rate region is given by the
union of all $(R_1,R_2)$ such that \cite{LiuPoorIT09},
\begin{eqnarray}\nonumber
& 0 \leq R_1 \leq \EE[I(V_1;Y_1)]-\EE[I(V_1;Y_{2},V_2)]\\ \label{ErgodicRegion}
& 0\leq R_2 \leq \EE[I(V_2;Y_2)]-\EE[I(V_2;Y_{1},V_{1})]
\end{eqnarray}
where the expectation is with regard to the fading space $\Ac$ and
the union is over all possible distributions $V_1,V_2,X$
satisfying
\begin{equation}\label{Markov2}
  (V_1,V_2) \rightarrow X \rightarrow (Y_1, Y_2).
\end{equation}
It can be seen that the ergodic secrecy rate of user $k$ can be expressed by
\begin{eqnarray}
R_k\leq \EE[I(V_k;Y_k)]-\EE[I(V_k;V_{k'})] -
\EE[I(V_k;Y_{k'}|V_{k'})]
\end{eqnarray}
where the first two terms can be interpreted as the ergodic Marton broadcast rate without secrecy constraint, and the last term $\EE[I(V_k;Y_{k'}|V_{k'})]$ represents the information accumulated at the non-intended receiver $k'$.

We next adapt the variable-rate transmission proposed in \cite[Theorem 2]{gopala2006scf} to the compound MISO-BCC.
We focus on the zero-forcing beamforming to provide an achievable s.d.o.f. region. At each channel use of block $t$, the transmitter forms the codeword
\begin{equation}\label{ZF}
\xv[t]=\xv_1[t] + \xv_2[t]= \vv_1[t] u_1[t] + \vv_2[t] u_2[t]
\end{equation}
where $\vv_k[t]$ denotes a unit-norm beamforming vector of user $k$ (to be specified below) and $u_k[t]\sim\Nc_{\Cc}(0,p_k[t])$ is symbol of user $k$, and $u_1[t],u_2[t]$ are mutually independent.
Clearly, the Markov chain (\ref{Markov2}) is satisfied by letting $V_k=\vv_k[t] u_k[t]$ and $X=V_1+V_2$ at each $t$.
Following \cite{gopala2006scf}, we assume that the transmitter sends the codeword $\xv_k[t]$ to user $k$ at rate given by
\begin{eqnarray}\label{TxRate}
R_{k,\rm tx}[t] &= & I(u_k[t];y_k[t])-I(u_k[t];u_{k'}[t]) \\
\nonumber
 &\overset{\mathrm{(a)}} {=}& I(u_k[t];y_k[t]) \\  \nonumber
&=& \sum_{j=1}^{J_k}P(A_k[t]=j|H[t]) I(u_k[t];y_k[t]|A_k[t]=j)
\end{eqnarray}
where (a) follows from the independency between $u_1[t]$ and
$u_2[t]$. This variable-rate strategy enables to limit the leaked
information at the non-intended receiver $k'$ at each block $t$
such that
\begin{small}
\begin{eqnarray}\nonumber
\lefteqn{I(u_k[t]; y_{k'}[t]|u_{k'}[t])}\\ \nonumber
&=& \sum_{j=1}^{J_{k'}}P(A_{k'}[t]=j|H[t])I(u_k[t]; y_{k'}[t]|u_{k'}[t],A_{k'}[t]=j)
\\ \label{VariableRateUB}
&\leq& R_{k,\rm tx}[t]
\end{eqnarray}
\end{small}
for $k'\neq k$ and $k=1,2$.
By combining (\ref{TxRate}) and (\ref{VariableRateUB}), the averaged secrecy rate of user $k$ over $m$ blocks is given by
\begin{eqnarray}\nonumber
    R_k^m &=& \frac{1}{m}\sum_{t=1}^m R_{k,\rm tx}[t]- \frac{1}{m}\sum_{t=1}^m I(u_k[t]; y_{k'}[t]|u_{k'}[t])\\ \label{AveragedSecRate}
&=& \frac{1}{m}\sum_{t=1}^m R_k[t]
\end{eqnarray}
where the secrecy rate of user $k$ at block $t$ is given by
\begin{eqnarray}\label{SecRateBlock}
R_k[t] &=&  [R_{k,\rm tx}[t]-I(u_k[t]; y_{k'}[t]|u_{k'}[t])]_+
\end{eqnarray}
We remark that similar to \cite{gopala2006scf}, the variable rate strategy avoids the non-intended receiver $k'$ to accumulate the information on symbol $k$ over $m$ blocks, whenever the channel condition is better than the transmission rate of user $k$.

\subsection{Secrecy Degree of Freedom Region}
In the following, we provide the s.d.o.f. analysis for different
cases of $(J_1,J_2)$.

\begin{theorem} The two-user ergodic fading compound MISO-BCC with $J_1<M,J_2<M$ achieves the s.d.o.f. region
\[
\{(r_1,r_2): r_1\leq 1, r_2\leq 1 \} . \]
\end{theorem}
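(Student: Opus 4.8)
The plan is to instantiate the zero-forcing variable-rate scheme of Section~\ref{subsect:VariableRate} with beamformers that null the \emph{entire} state set of the unintended receiver, and to show that this simultaneously removes all leakage and delivers a full degree of freedom to each user.

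\emph{Beamformer design.} At each block $t$ I would choose $\vv_1[t]$ in the joint null space of $\hv_2^1[t],\ldots,\hv_2^{J_2}[t]$ and $\vv_2[t]$ in the joint null space of $\hv_1^1[t],\ldots,\hv_1^{J_1}[t]$. Since $J_2<M$ and $J_1<M$, these candidate vectors span subspaces of dimension at most $J_2$ and $J_1$, so the two null spaces have dimension at least $M-J_2\ge 1$ and $M-J_1\ge 1$ and are nonempty. By construction ${\hv_2^j[t]}\her\vv_1[t]=0$ and ${\hv_1^j[t]}\her\vv_2[t]=0$ for all $j$, so whatever states $A_1[t],A_2[t]$ are realized, user $k$'s codeword is completely suppressed at receiver $k'$.

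\emph{Leakage and rate.} This suppression forces the leakage term in \eqref{VariableRateUB} to vanish, $I(u_k[t];y_{k'}[t]\,|\,u_{k'}[t])=0$, so \eqref{SecRateBlock} collapses to $R_k[t]=R_{k,\rm tx}[t]$. With the cross-interference gone, receiver $k$ in state $j$ observes $y_k[t]={\hv_k^j[t]}\her\vv_k[t]\,u_k[t]+\nu_k[t]$, whence \eqref{TxRate} gives
\[
R_{k,\rm tx}[t]=\frac{1}{J_k}\sum_{j=1}^{J_k}\log\!\Big(1+p_k[t]\,\big|{\hv_k^j[t]}\her\vv_k[t]\big|^2\Big).
\]
Allocating equal power $p_k[t]=P/2$ (which meets the long-term constraint) and letting $P\to\infty$, each summand grows like $\log P$ provided every gain $|{\hv_k^j[t]}\her\vv_k[t]|^2$ is strictly positive; then $R_k[t]/\log P\to 1$, and taking the ergodic average $R_k=\EE[R_k[t]]$ preserves this limit, so $r_k=1$. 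Performing this for both users at once achieves the corner $(1,1)$, and since any achievable region is closed under decreasing the rates, this establishes the whole rectangle $\{(r_1,r_2):r_1\le1,\,r_2\le1\}$.

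\emph{Main obstacle.} The delicate step is guaranteeing positivity of all $J_k$ gains at once: a beamformer drawn from the opposite user's null space must not be orthogonal to any of user $k$'s own state vectors, or the corresponding $\log P$ term would disappear and the s.d.o.f.\ would drop to $(J_k-1)/J_k$. Within the $(M-J_{k'})$-dimensional null space $\Nc_k$ of $\{\hv_{k'}^1[t],\ldots,\hv_{k'}^{J_{k'}}[t]\}$, the vectors orthogonal to $\hv_k^j[t]$ form a proper subspace exactly when $\hv_k^j[t]\notin\mathrm{span}\{\hv_{k'}^1[t],\ldots,\hv_{k'}^{J_{k'}}[t]\}$; because $J_{k'}+1\le M$, this non-containment is precisely what the full-rank (generic-position) assumption supplies. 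A generic vector of $\Nc_k$ therefore avoids this finite union of proper subspaces, securing positivity and hence the claimed region.
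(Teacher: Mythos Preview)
Your proposal is correct and follows essentially the same approach as the paper: the same zero-forcing beamformers nulling the entire state set of the unintended receiver, the same observation that leakage vanishes, and the same per-block rate expression. The only cosmetic differences are that the paper achieves the three points $(1,0)$, $(0,1)$, $(1,1)$ separately and time-shares, whereas you go straight to $(1,1)$ and invoke monotonicity, and that you spell out the positivity-of-gains argument (via the rank assumption) that the paper merely asserts.
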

\begin{proof}
At each block $t$, the transmitter forms $\xv[t]$ given in (\ref{ZF}) by choosing $\vv_1[t]$ orthogonal to $\hv_2^1[t],\dots,\hv_2^{J_2}[t]$ and $\vv_2[t]$ orthogonal to $\hv_1^1[t],\dots,\hv_1^{J_1}[t]$. This
yields the received signals for $k=1,2$ given by
\begin{eqnarray*}
y_k[t] =\phi_{k,k}^j[t]u_k[t]+ \nu_k[t] , \mbox{\small w.p. $P(A_k[t]=j|H[t])=\frac{1}{J_k},\forall j$ }
\end{eqnarray*}
where $\phi_{k,i}^j[t]={\hv_k^j[t]}^H \vv_i[t]$. It can be shown
that $\vv_1[t]$ and $\vv_2[t]$ can be chosen such that
$\phi_{k,k}^j[t] \neq 0$.
Since the ZF creates two parallel channels for any pair $(A_1[t],A_2[t])$, the averaged secrecy rate of user $k$ over $m$ blocks is readily given by
\begin{eqnarray*}
 R_k^m 
 &\leq & \frac{1}{m J_k}\sum_{t=1}^m  \sum_{j=1}^{J_k} C(p_k[t]|\phi_{k,k}^j[t]|^2) 
\end{eqnarray*}
As $m\rightarrow\infty$, the corner point $(1,0),(0,1)$ is
achieved by allocating $p_1[t]=P,\forall t$, $p_2[t]=P, \forall
t$, respectively, and the rate point $(1,1)$ is achieved by equal
power allocation $p_1[t]=p_2[t]=P/2$ at each $t$. Time-sharing
between three points yields the region.
\end{proof}
\begin{theorem}
The two-user ergodic fading compound MISO-BCC with $J_1<M,J_2\geq
M$ achieves the s.d.o.f. region (see
Fig.\ref{fig:SDFregionMcase12}) that includes $(r_1,r_2)$
satisfying
\begin{eqnarray}
 r_1\leq \frac{M-1}{J_2}, \quad\quad \left(\frac{J_2}{M-1}-1\right)r_1 + r_2 \leq 1
\end{eqnarray}
\end{theorem}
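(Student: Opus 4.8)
The plan is to combine a \emph{partial} zero-forcing beamforming design with the variable-rate framework of Section~\ref{subsect:VariableRate}, and then to read off the two constraints as the pre-log factors of the resulting secrecy rates as $P\to\infty$. The governing idea is that, unlike the $J_2<M$ case of the previous theorem, here $J_2\ge M$ forbids nulling user~1's signal at \emph{all} of user~2's states, so user~1's residual signal plays a double role at receiver~2---it both leaks and interferes---and it is exactly this that turns the rectangle into a sloped tradeoff.

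First I would fix the beamformers. Since $J_1<M$, as in the previous theorem I choose $\vv_2[t]$ orthogonal to all $J_1$ vectors $\hv_1^1[t],\dots,\hv_1^{J_1}[t]$, so user~2's signal is completely nulled at receiver~1 in every state, while the rank assumption lets me additionally take $\phi_{2,2}^j[t]={\hv_2^j[t]}^H\vv_2[t]\neq 0$ for all $j$. For user~1, $J_2\ge M$ prevents nulling all of user~2's states; instead I choose $\vv_1[t]$ orthogonal to an arbitrary set of $M-1$ of the vectors $\hv_2^1[t],\dots,\hv_2^{J_2}[t]$ (a one-dimensional solution space), and the rank assumption again guarantees $\phi_{1,1}^j[t]\neq 0$ for all $j$. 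This existence claim is the analogue of Lemma~\ref{lemma:rank}. With these beamformers the received signals split cleanly: receiver~1 is interference-free, $y_1[t]=\phi_{1,1}^j[t]u_1[t]+\nu_1[t]$, whereas $y_2[t]=\phi_{2,2}^j[t]u_2[t]+\phi_{2,1}^j[t]u_1[t]+\nu_2[t]$ with $\phi_{2,1}^j[t]=0$ only for the $M-1$ nulled states and $\phi_{2,1}^j[t]\neq 0$ for the remaining $J_2-M+1$ states.

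Next I would feed these into the variable-rate expressions, allocating $p_1[t]\sim P^{\alpha_1}$ and $p_2[t]\sim P^{\alpha_2}$ with power exponents $\alpha_1,\alpha_2\in[0,1]$ at every block. Via \eqref{TxRate} the transmit rate of user~1 has pre-log $\alpha_1$, while the leakage \eqref{VariableRateUB} has pre-log $\frac{J_2-M+1}{J_2}\alpha_1$, since receiver~2 accumulates information about $u_1$ only in the $J_2-M+1$ non-nulled states; hence by \eqref{SecRateBlock} the s.d.o.f. of user~1 is $r_1=\frac{M-1}{J_2}\alpha_1\le\frac{M-1}{J_2}$, the first constraint. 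For user~2 there is no leakage to receiver~1, so $R_2[t]=R_{2,\mathrm{tx}}[t]=I(u_2[t];y_2[t])$; treating $u_1[t]$ as Gaussian interference, the $M-1$ interference-free states each contribute pre-log $\alpha_2$ and the $J_2-M+1$ interfered states each contribute $[\alpha_2-\alpha_1]_+$, giving $r_2=\alpha_2-\frac{J_2-M+1}{J_2}\alpha_1$ when $\alpha_2\ge\alpha_1$. Eliminating $\alpha_1=\frac{J_2}{M-1}r_1$ and using $\alpha_2\le 1$ yields $\big(\tfrac{J_2}{M-1}-1\big)r_1+r_2\le 1$, the second constraint. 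The vertices $(0,1)$ and $(\frac{M-1}{J_2},\frac{M-1}{J_2})$ are obtained with $\alpha_2=1$ and $\alpha_1\in\{0,1\}$, the vertex $(\frac{M-1}{J_2},0)$ with $\alpha_1=1,\alpha_2=0$, and time-sharing over these points (valid in the ergodic setting) fills the convex polygon in the statement.

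I expect the main obstacle to be the rigorous accounting of the interference term at receiver~2. Because user~1's residual signal simultaneously (i) leaks and must be subtracted in \eqref{SecRateBlock} and (ii) caps $R_{2,\mathrm{tx}}[t]$ as interference, and because it is present in exactly $J_2-M+1$ of user~2's states, the two s.d.o.f. become coupled; tracking the $[\,\cdot\,]_+$ operations and the state-averaging $\frac{1}{J_k}\sum_j$ correctly through the $P\to\infty$ limit is where care is needed, and it is precisely this double role that produces the sloped boundary rather than a rectangle.
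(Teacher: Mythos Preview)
Your proposal is correct and follows essentially the same approach as the paper: the same partial zero-forcing choice ($\vv_2[t]$ nulls all $J_1$ states of receiver~1, $\vv_1[t]$ nulls only $M-1$ of receiver~2's $J_2$ states), the same variable-rate accounting, and the same three extreme points $(0,1)$, $(\tfrac{M-1}{J_2},0)$, $(\tfrac{M-1}{J_2},\tfrac{M-1}{J_2})$ filled in by time-sharing. The only cosmetic difference is that the paper reaches those vertices with concrete power choices $p_k[t]\in\{0,P/2,P\}$ rather than your power-exponent parametrization $p_k[t]\sim P^{\alpha_k}$, but the pre-log computations and the resulting region are identical.
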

\begin{proof}
At each block $t$, the transmitter chooses $\vv_1[t]$ orthogonal
to the first $M-1$ states\footnote{The same result holds for any
$M-1$ set taken from $\{\hv_2^1[t],\dots,\hv_2^{J_2}[t]\}$. }
$\hv_2^1[t],\dots,\hv_2^{M-1}[t]$ and $\vv_2[t]$ orthogonal to
$\hv_1^1[t],\dots,\hv_1^{J_1}[t]$ to form the codeword (\ref{ZF})
at each $t$. This yields the receive signals
\begin{small}
\begin{eqnarray*}
&y_1[t] =
\begin{array}
[c]{ll}%
 \phi_{1,1}^j[t] u_1[t]+ \nu_1[t] , &\mbox{w.p. $P(A_1[t]=j|H[t])=\frac{1}{J_1},\forall j$}\\
\end{array}\\
&y_2[t] =\left\{
\begin{array}
[c]{l}%
\phi_{2,2}^j[t]u_2[t]+ \nu_2[t] , \\
 \quad\quad\mbox{w.p. $P(A_2[t]=j|H[t])=\frac{1}{J_2}$ for $j\leq M-1$}\\
\phi_{2,1}^j[t]u_1[t] +\phi_{2,2}^j[t]u_2[t]+ \nu_2[t] \\
 \quad\quad\mbox{w.p. $P(A_2[t]=j|H[t])=\frac{1}{J_2}$ for $M\leq j\leq J_2$}\\
\end{array}
\right.
\end{eqnarray*}
\end{small}
{\normalsize We remark that the increased channel uncertainty at
user 2 ($J_2\geq M$) incurs two effects. First, it decreases the
transmission rate of user 2 due to interference from user 1.
Second, it decreases the secrecy rate of user 1 since user 2
observes $u_1[t]$ with probability $\frac{J_2-M+1}{J_2}$, if
$A_k[t]$ is between $M$ and $J_2$. We obtain the secrecy rates at
block $t$ given by}
\begin{footnotesize}
\begin{flalign}
R_1 [t]&\leq  \left[\frac{1}{J_1}\sum_{j=1}^{J_1}C(p_1[t]|\phi_{1,1}^j[t]|^2)-\frac{1}{J_2}\sum_{j=M}^{J_2} C( p_1[t]|\phi_{2,1}^j[t]|^2 )\right]_+ \nonumber\\
R_2[t] &\leq
\frac{1}{J_2}\sum_{j=1}^{M-1}C(p_2[t]|\phi_{2,2}^j[t]|^2)+
\frac{1}{J_2}\sum_{j=M}^{J_2}
C\left(\frac{p_2[t]|\phi_{2,2}^j[t]|^2}{1+ p_1[t]|
\phi_{2,1}^j[t]|^2}\right) \nonumber
\end{flalign}
\end{footnotesize}
Plugging these expressions into (\ref{AveragedSecRate}) and letting $m\rightarrow\infty$, the corner point $(0,1), (\frac{M-1}{J_2},0)$ is achieved by letting $p_2[t]=P$ and $ p_1[t]=P$ for all $t$. Under equal power allocation $p_1[t]=p_2[t]=\frac{P}{2}$ for all $t$, $\left(\frac{M-1}{J_2},\frac{M-1}{J_2}\right)$ is achieved. Time-sharing of these three points yields the region.
\end{proof}

\begin{figure}[t]
    \begin{center}
\epsfxsize=1.7in
\epsffile{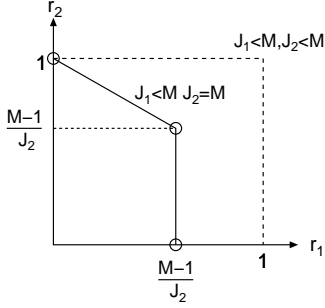}
    \end{center}
    \vspace{-0.2in}
    \caption{s.d.o.f. region for $J_1,J_2<M$ and $J_1<M, J_2\geq M$}
    \label{fig:SDFregionMcase12}
\end{figure}
\begin{theorem}
Consider the two-user ergodic compound MISO-BCC with $J_1\geq
M,J_2\geq M$. We define the function
$f(J_1,J_2)=\frac{M-1}{J_1}+\frac{M-1}{J_2}-1-\frac{M-1}{J_1+J_2}$.
If $f(J_1,J_2)\leq 0$, an achievable region is given by the
time-sharing between $(\frac{M-1}{J_2},0)$ and
$(0,\frac{M-1}{J_1})$. If $f(J_1,J_2)> 0$, an achievable region
(see Fig.\ref{fig:SDFregionMcase3}) is time-sharing between these
two points and $(r_s,r_s)$ with
$r_s=\frac{M-1}{J_1}+\frac{M-1}{J_2}-1$.
\end{theorem}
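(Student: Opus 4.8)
The plan is to reuse the zero-forcing construction of the previous two theorems, now applied symmetrically to both users, and then to read off the three power allocations that reach the claimed operating points. First I would choose $\vv_1[t]$ orthogonal to the first $M-1$ channel states $\hv_2^1[t],\dots,\hv_2^{M-1}[t]$ of user $2$ and, symmetrically, $\vv_2[t]$ orthogonal to $\hv_1^1[t],\dots,\hv_1^{M-1}[t]$. Since only $M-1$ of the $J_{k'}\ge M$ states of the unintended user can be nulled, the rank assumption guarantees that each unit-norm $\vv_k[t]$ exists and that $\hv_k^j[t]$ lies outside the nulled span, so the desired coefficients $\phi_{k,k}^j[t]={\hv_k^j[t]}^H\vv_k[t]$ are nonzero. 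Consequently the signal of user $k$ splits into $M-1$ \emph{clean} states ($j\le M-1$) and $J_k-M+1$ \emph{interfered} states ($M\le j\le J_k$), the latter carrying the other user's symbol through $\phi_{k,k'}^j[t]\neq 0$.

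Next, from the variable-rate expression \eqref{SecRateBlock}, I would write the per-block secrecy rate of user $1$ as
\begin{footnotesize}
\begin{flalign}
R_1[t]\leq \bigg[ &\frac{1}{J_1}\sum_{j=1}^{M-1}C(p_1|\phi_{1,1}^j|^2)+\frac{1}{J_1}\sum_{j=M}^{J_1}C\!\left(\frac{p_1|\phi_{1,1}^j|^2}{1+p_2|\phi_{1,2}^j|^2}\right) \nonumber \\
&-\frac{1}{J_2}\sum_{j=M}^{J_2}C(p_1|\phi_{2,1}^j|^2)\bigg]_+ \nonumber
\end{flalign}
\end{footnotesize}
and symmetrically for $R_2[t]$ (time indices suppressed). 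The leakage sum runs only over $j\ge M$ because $\phi_{2,1}^j=0$ in user $2$'s clean states by construction, while the interfered states enter the transmission rate with an SINR penalty from treating $u_2$ as noise.

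Then I would evaluate the s.d.o.f., using that the pre-log of $C(\cdot)$ equals the d.o.f. of its SINR as $P\to\infty$. Under $(p_1,p_2)=(P,0)$ the interfered states carry no interference, so the transmission d.o.f. is $1$ while the leakage removes $1-\frac{M-1}{J_2}$, giving the corner $(\frac{M-1}{J_2},0)$; the mirror choice $(0,P)$ gives $(0,\frac{M-1}{J_1})$. Under equal power $p_1=p_2=P/2$ the interfered SINR $\frac{(P/2)|\phi_{1,1}^j|^2}{1+(P/2)|\phi_{1,2}^j|^2}$ converges to a constant and contributes zero d.o.f., so the transmission d.o.f. drops to $\frac{M-1}{J_1}$ while the full-power leakage still removes $1-\frac{M-1}{J_2}$; hence $r_1\to\frac{M-1}{J_1}+\frac{M-1}{J_2}-1=r_s$ and, by symmetry, $r_2\to r_s$.

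Finally I would settle when the symmetric point enlarges the region. The segment joining the two single-user corners crosses the diagonal $r_1=r_2$ at $\lambda=\frac{J_2}{J_1+J_2}$, i.e. at $(\frac{M-1}{J_1+J_2},\frac{M-1}{J_1+J_2})$, so $(r_s,r_s)$ lies strictly beyond that segment exactly when $r_s>\frac{M-1}{J_1+J_2}$, which is precisely $f(J_1,J_2)>0$; when $f(J_1,J_2)\le 0$ the point $(r_s,r_s)$ falls on or inside the triangle and adds nothing. Time-sharing among the achievable points via \eqref{AveragedSecRate} then yields the two stated regions. I expect the routine d.o.f. bookkeeping to be straightforward; the one step needing care is this last geometric translation, namely verifying that equal power is the allocation reaching the outermost symmetric point and that ``the symmetric point is useful'' is captured exactly by the sign of $f$.
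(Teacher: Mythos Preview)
Your proposal is correct and follows essentially the same approach as the paper: the same symmetric partial zero-forcing (each $\vv_k[t]$ nulling $M-1$ of the other user's states), the same per-block secrecy rate decomposition into clean states, interfered states, and leakage over the $J_{k'}-M+1$ non-nulled states, and the same three power allocations $(P,0)$, $(0,P)$, $(P/2,P/2)$ to reach the corner points and $(r_s,r_s)$. Your geometric verification that $(r_s,r_s)$ dominates the segment exactly when $r_s>\frac{M-1}{J_1+J_2}$, i.e.\ when $f(J_1,J_2)>0$, is also the paper's argument.
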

\begin{proof}
Without loss of generality, the transmitter chooses $\vv_1[t]$
orthogonal to $\hv_2^1[t],\dots,\hv_2^{M-1}[t]$ and $\vv_2[t]$
orthogonal to $\hv_1^1[t],\dots,\hv_1^{M-1}[t]$ to form the
codeword given in (\ref{ZF}) at block $t$. This yields the receive
signals
\begin{small}
\begin{eqnarray*}
y_k[t] =\left\{
\begin{array}
[c]{l}
\phi_{k,k}^j[t] u_k[t]+ \nu_k[t] , \\
  \quad\quad\mbox{w.p. $P(A_k[t]=j|H[t])=\frac{1}{J_k}$ for $j\leq M-1$}\\
 \phi_{k,k}^j[t] u_k[t]+ \phi_{k,k'}^j[t]u_{k'}[t]+\nu_k[t] , \\
 \quad\quad\mbox{w.p. $P(A_k[t]=j|H[t])=\frac{1}{J_k}$ for $M\leq j\leq J_k$}\\
\end{array}\right. 
\end{eqnarray*}
\end{small}
for $k=1,2$.
By taking into account the two effects caused by the increased channel uncertainty mentioned above,
we obtain the secrecy rate of user $k$ at block $t$ is given by
\begin{footnotesize}
\begin{eqnarray*}
\begin{split}
R_k[t] = \left[\frac{1}{J_k}\sum_{j=1}^{M-1} C( p_k[t]|\phi_{k,k}^j[t]|^2) -\frac{1}{J_{k'}}\sum_{j=M}^{J_{k'}}C( p_k[t] |\phi_{k',k}^j[t]|^2 )\right. \\ \nonumber
+
\left.\frac{1}{J_k}\sum_{j=M}^{J_k}C\left(\frac{p_k[t]|\phi_{k,k}^j[t]|^2}{1+p_{k'}[t]
 |\phi_{k,k'}^j[t]|^2}\right)\right]_+
\end{split}
\end{eqnarray*}
\end{footnotesize}
for $k=1,2$.
Plugging the above expression into (\ref{AveragedSecRate}) and letting $m\rightarrow\infty$,
the corner point $A=\left(\frac{M-1}{J_2},0\right), B=\left(0,\frac{M-1}{J_1}\right)$ is achieved by letting $p_1[t]=P, p_2[t]=P, \forall t$, respectively. Under equal power allocation $p_1[t]=p_2[t]=P/2,\forall t$,
we have two different behaviors according to the value of $f(J_1,J_2)$. Interestingly, if $f(J_1,J_2)>0$, the s.d.o.f. point $C=(r_s,r_s)$ which dominates the line segment A B is achieved, as shown in Fig.\ref{fig:SDFregionMcase3}. On the contrary, if $f(J_1,J_2)\leq 0$, the point $(r_s,r_s)$ is below the line segment A B. This can be easily verified by comparing $r_s$ and the intersection between the line segment A B and $r_2=r_1$.
\end{proof}

\begin{figure}[t]
    \begin{center}
\epsfxsize=3.2in
\epsffile{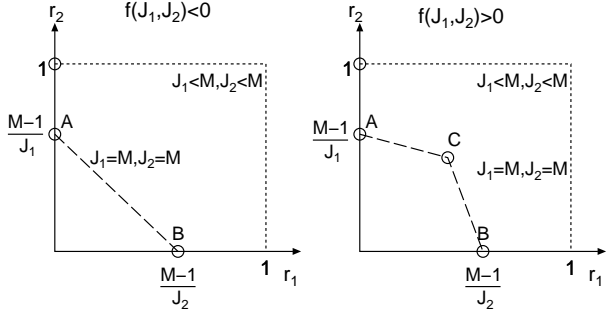}
    \end{center}
    \vspace{-0.2in}
    \caption{s.d.o.f. region for $J_1\geq M, J_2\geq M$}
    \label{fig:SDFregionMcase3}
    \vspace{-0.15in}
\end{figure}

We remark that an achievable s.d.o.f. with the ergodic model
gradually decreases as the uncertainty increases. Moreover, the
time variation of the channel state creates an additional temporal
dimension, and significantly improves the s.d.o.f. with respect to
the Gaussian model with constant channel state. We provide a
simple example to illustrate the difference between two models.
Consider the compound MISO-BCC with $M=7, J_1=J_2=8$. The ergodic
model achieves $(1/2, 1/2)$ which dominates the time-sharing
between the corner points $(3/4,0)$ and $(0,3/4)$. The Gaussian
model yields zero s.d.o.f. for both users. This radical difference
is because the number of channel states over which perfect secrecy
must be kept for the Gaussian model equals the number of
wiretappers, which is not the case for the ergodic model.

\section{Conclusions}\label{sec:conclusions}
We have studied the two-user compound MIMO-BCC, for which we have
found that time variation of the channel state provides an
additional temporal dimension for the ergodic model, which
improves an achievable s.d.o.f. region compared to the Gaussian
model with a constant fading state, although at the price of a
larger delay. We note also that in contrast to the compound
MIMO-BC \cite{weingarten2007cmb}, the gain by multiletter
approaches (i.e. combining several time instances) is not expected
here. Finally, we conjecture that an achievable s.d.o.f. region
provided in the paper is indeed the s.d.o.f. region and the proof
remains as a future investigation.

\section*{Acknowledgment}
This work is partially supported by the European Commission in the framework of the FP7 Network of Excellence in Wireless Communications NEWCOM++.

\bibliographystyle{IEEEtran}
\bibliography{CompoundMIMOBCC-CR}

\end{document}